\documentclass[12pt]{article}
\usepackage[margin=1in]{geometry}
\usepackage{setspace}
\usepackage{graphicx}
\usepackage{amssymb}
\usepackage{amsmath}
\usepackage{xcolor}
\usepackage{amsthm}
\usepackage{hyperref}
\usepackage{booktabs}
\usepackage{tabularx}
\usepackage{comment}
\usepackage{authblk}

\newtheorem{proposition}{Proposition}

\title{Towards welfare-oriented recommendations in \\ activity-travel behavior}
\date{}

\author[1,2,*]{Ekin Uğurel}
\author[2]{Takahiro Yabe}

\affil[1]{Department of Industrial and Systems Engineering, University of Houston, Houston, TX, USA}
\affil[2]{Department of Technology Management and Innovation, New York University, New York, NY, USA}

\begin{document}
\maketitle

\noindent\footnotesize{*Corresponding author: \href{mailto:eugurel@central.uh.edu}{eugurel@central.uh.edu}}

\begin{abstract}
    While mainstream recommender systems (RS) rely on diverse heuristics to rank alternatives, they generally lack a principled account of user welfare (i.e., whether accepting the recommendation will leave the user better off than other alternatives). The problem is particularly acute in activity-based travel behavior, where users incur costs they cannot recoup (i.e., energy, time) regardless of eventual satisfaction. As a result, existing systems may recommend options based on popularity or collaborative filtering, but may still leave users worse off than nearby or self-selected alternatives. We address this gap by introducing a welfare-oriented framework for activity recommendation that evaluates suggestions in terms of net utility, defined as experienced benefit minus travel costs. Specifically, we formalize two operational decision criteria: Positive Utility Probability (PUP) recommends only when the probability of non-negative net utility exceeds a threshold, while Regret Minimization (RM) recommends only when expected regret relative to the user’s best organic alternative falls below a tolerance level. To evaluate these criteria, we develop an agent-based simulation in which heterogeneous synthetic travelers interact with multiple RS over time in a spatial environment with realistic travel costs, congestion, and behavioral feedback loops. This framework enables controlled counterfactual evaluations, and offers a practical foundation for designing RS that treat user welfare as a primary objective rather than an incidental byproduct. The simulation code can be found \href{https://github.com/ekinugurel/welfare_oriented_rec_sys/blob/main/paper_figures_and_tables.ipynb}{here}.
\end{abstract}

\section{Introduction}

\label{sec:intro}

Recommender systems (RS) have become pervasive intermediaries in everyday decision-making, from media consumption to restaurant selection to route planning \cite{chaudhari_comprehensive_2020, pedreschi_human-ai_2025}. Because these systems exert such significant influence over daily routines, their design increasingly relies on signals intended to align suggestions with a user's actual well-being. This is particularly evident in location-based recommendations, where platforms incorporate multi-dimensional data to optimize the user experience. For instance, Google Maps ranks venues by balancing proximity with quality-based prominence \cite{google_local_ranking}, OpenTable filters for both culinary preference and real-time availability \cite{kuo2015contextual}, and platforms like Yelp or TripAdvisor utilize collaborative filtering to surface popular, high-value options \cite{nilashi2018travelers}.

The difficulty is that these signals are typically integrated in an ad hoc manner. Distance, for instance, is treated "as the crow flies" rather than as a component of a generalized travel cost that accounts for mode-specific travel times, monetary expenditures, or the opportunity cost of time spent in transit \cite{quarmby1967choice}. Popularity and aggregate ratings reflect revealed preferences (RPs) averaged over many users, but they ignore heterogeneity in tastes, decision-making styles, and situational context. Collaborative filtering captures behavioral similarity, but behavioral similarity is not welfare similarity: two users who tend to visit the same places may nonetheless differ substantially in how they evaluate the trade-off between travel cost and activity enjoyment \cite{vanwee_meta-theory_2025}. Moreover, decades of behavioral research have shown that RPs are themselves imperfect proxies for welfare. People are susceptible to temptation \cite{anwar_recommendation_2025}, exhibit loss aversion and reference dependence \cite{kahneman_prospect_1979}, satisfice under cognitive load \cite{caplin2011search, payne1993adaptive}, and fall into habitual patterns that persist even when better alternatives exist \cite{vanwee_meta-theory_2025, di2022information}. A system that ranks options using these RP signals without a welfare-theoretic foundation inherits these biases and may amplify them through feedback loops \cite{mansoury2020feedback, chaney_how_2018, thorburn_societal_2024}.



This gap between heuristic and principled integration matters especially for \emph{activity} recommendations (i.e., what to do and where to go), which impose a travel cost regardless of eventual satisfaction. Time spent commuting, fuel, congestion, and missed alternatives are real costs that standard RS metrics ignore. When a system suggests a distant option, it asks the user to invest time and money upfront; if the experience disappoints, that cost is unrecoverable. For example, a user may follow a social media influencer's recommendation to a highly rated restaurant across town, endure traffic, and leave wishing they had gone to a nearby cafe. The system ranked the restaurant highly because it was popular and not prohibitively far, but it did not estimate whether the net utility (enjoyment minus travel cost) actually exceeded the user’s next-best local option. This gap between what systems \emph{could} evaluate and what they actually evaluate is the central concern of this paper.

Van Wee and Mokhtarian \cite{vanwee_meta-theory_2025} outline a meta-theory for travel-related choices (MTTC) identifying interacting building blocks, which include personal characteristics, the decision-making paradigm (e.g., utility maximization, regret minimization), and contextual factors. All of these govern how travelers evaluate alternatives. Individuals differ in how they evaluate tradeoffs, with some placing substantially more weight on potential losses or disappointments than on equivalent gains \cite{kahneman1984choices, tversky1991loss}. Bridging welfare economics from this tradition and personalization from the RS tradition opens a productive design space: rather than asking ``which venue should we rank first?’’, we can ask a normatively richer question: ``will following this recommendation leave the user better off than their next-best alternative?’’

In this paper, we argue that when the answer to the latter question is uncertain or likely negative, the recommender should either (1) \emph{abstain} from making the recommendation and let the user make an organic choice, or (2) prioritize more conservative recommendations that have a lower probability of resulting in a net negative experience for the user. We define user welfare at the individual level through the net utility of the recommended outing, which incorporates both the experienced benefit of the activity and the costs required to realize it (i.e., travel time, money, and effort). 

To formalize this intuition, we define two welfare criteria that a recommender can evaluate before issuing a suggestion. The first, \emph{Positive Utility Probability} (PUP), requires the system to estimate whether the user will derive non-negative net utility from the recommended activity, accounting for both the activity experience and the travel cost of reaching it. Only recommendations exceeding a confidence threshold $\alpha$ are issued. The second, \emph{Regret Minimization} (RM), compares the expected utility of the recommended activity against the best alternative the user would have chosen independently. Only recommendations whose expected regret falls below a tolerance $\epsilon$ are issued. Both criteria operationalize the same principle, but differ in what they protect against. PUP guards against recommendations that are absolutely harmful, while RM guards against recommendations that are relatively inferior.

Testing these criteria empirically poses a fundamental challenge. In real-world settings, we cannot observe the counterfactual (i.e., what the user would have experienced had the recommender not intervened). Nor can we randomly withhold recommendations at scale without ethical and commercial concerns. We therefore turn to simulation. We develop an agent-based model (ABM) grounded in the MTTC framework \cite{vanwee_meta-theory_2025}, in which a heterogeneous population of synthetic travelers interacts with a stack of RS over multiple days. The agents are constructed with psychologically rich profiles including Big Five personality traits, latent variables for trust and autonomy, heterogeneous decision-making paradigms, and dynamic willingness to follow algorithmic suggestions. The RS stack represent the heuristic ranking paradigms that dominate real-world activity platforms and serve as baselines against which we test welfare-oriented alternatives. This setup allows us to run controlled experiments that vary the welfare criterion, observe complete counterfactuals, and disaggregate outcomes by agent type.


We use the above framework to answer three linked research questions:
\begin{enumerate}
    \item[\textbf{RQ1.}] How does welfare-oriented recommendation filtering (using PUP and RM) impact aggregate systemic utility and the distribution of user welfare compared to unconstrained heuristic baselines?
    \item[\textbf{RQ2.}] For what types of users would welfare-oriented recommender systems provide more benefit?
    \item[\textbf{RQ3.}] What is the individual cost of over-recommendation, and to what extent can welfare-oriented filtering reduce the frequency and severity of algorithmic harm compared to making choices organically?
\end{enumerate}

We make three contributions in answering these questions. First, we develop a formal welfare-oriented framework for recommender systems built on two tractable criteria (PUP and RM) that give the system a principled basis for deciding when to recommend and when to abstain. Second, we construct an agent-based simulation testbed grounded in the MTTC that operationalizes heterogeneous travelers interacting with multiple recommender systems in a spatial environment with realistic travel costs and feedback loops. Third, our experimental results identify the conditions under which welfare-oriented recommendation outperforms conventional baselines, characterize which types of users benefit most, and quantify the aggregate cost of over-recommendation. Together, these contributions offer both a normative argument and an empirical toolkit for designing recommender systems that treat user welfare as a primary objective (rather than a byproduct of heuristic ranking).

\begin{figure}
    \centering
    \includegraphics[width=0.75\textwidth]{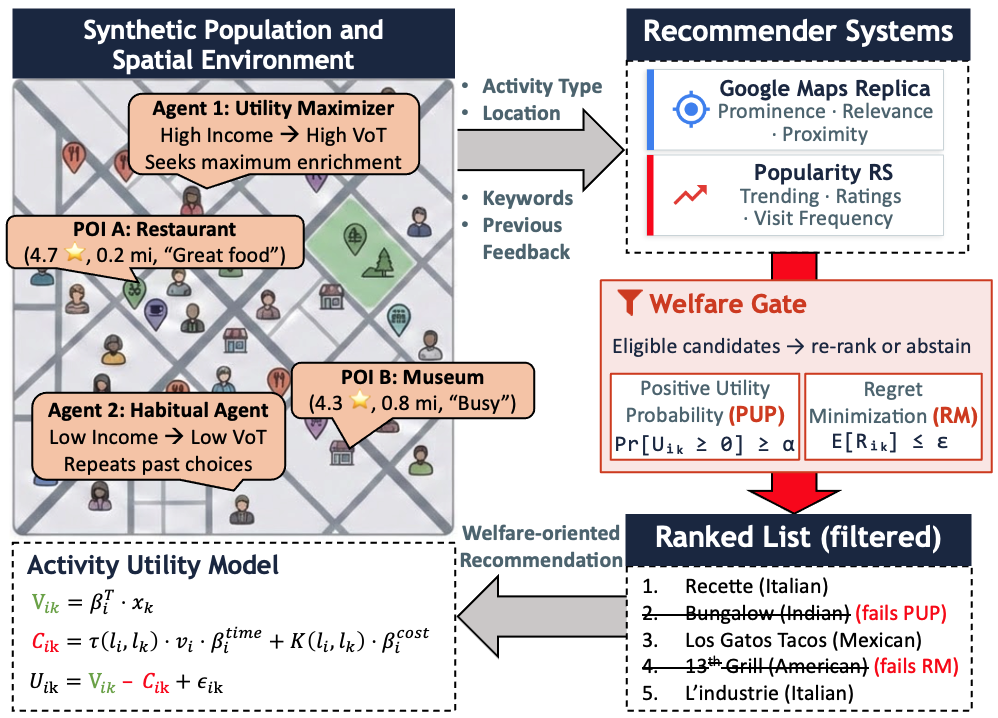}
    \caption{Welfare-oriented filtering framework. (left) Agent-based simulation and the underlying activity utility model; (right) recommendation generation, filtering, and feedback pipeline.}
    \label{fig:overall_framework}
\end{figure}

\section{Related Work}
\label{sec:relatedWork}

\subsection{RS evaluation beyond accuracy}

The dominant evaluation paradigm for recommender systems relies on accuracy metrics derived from revealed preferences: click-through rates, precision at $k$, NDCG, and similar measures that treat user behavior as the ground truth for user satisfaction \cite{pu_user-centric_2011}. A growing body of work argues that this paradigm is insufficient. Pu et al. \cite{pu_user-centric_2011} propose a user-centric evaluation framework that captures dimensions such as perceived recommendation quality, interface adequacy, and decision confidence, none of which are observable from clicks alone. Dokoupil and Peska \cite{dokoupil_how_2025} show that users' perceptions of RS objectives diverge substantially from the objectives that systems actually optimize, suggesting a misalignment between design intent and user experience.

A parallel stream examines RS from a multi-stakeholder perspective. Cruz et al. \cite{cruz_capri-fair_2024} develop fairness-aware recommendation that balances provider exposure against consumer relevance, while Wang et al. \cite{wang_recommending_2025} model multi-sided marketplace objectives hierarchically to prevent recommendation quality from being sacrificed to platform revenue goals. At a broader level, Wagner et al. \cite{wagner_measuring_2021} argue that algorithmic systems reshape the phenomena they claim to measure, creating feedback loops in which standard metrics become performative. Our work aligns with these perspectives using a normative position. Rather than proposing alternative accuracy metrics or fairness constraints, we formalize the conditions under which recommendations should be more conservative or not be issued at all. 


\subsection{Welfare and regret}

The question of whether RS should optimize for what users \emph{choose} or what makes them \emph{better off} has deep roots in welfare economics and is receiving renewed attention in the RS community. Donnelly et al.\ \cite{donnelly_welfare_2024} use a large-scale field experiment to estimate the consumer surplus generated by personalized rankings, finding that personalization generates substantial welfare gains relative to uniform bestseller-based orderings but that the distribution of these gains is uneven. Anwar and Hardt \cite{anwar_recommendation_2025} formalize the distinction between what users are tempted by and what serves their long-term interests, showing that RS can exploit the gap between temptation and welfare by steering users toward high-engagement but low-satisfaction options. Li et al.\ \cite{li_right_2024} study a related tension between repeat consumption and exploration, finding that the ``right'' recommendation depends on whether the system privileges short-term engagement or long-term user satisfaction.

The concept of \emph{regret} in our framework differs from its typical use in the RS literature. In bandit-based RS, regret measures the cumulative loss from suboptimal exploration-exploitation trade-offs over a sequence of decisions \cite{chen_values_2021}. Our RM criterion instead focuses on the disutility experienced when a chosen option turns out worse than a foregone alternative \cite{kahneman_prospect_1979}. This formulation is agent-centric rather than platform-centric, asking what the user lost by following the recommendation rather than what the platform lost by recommending suboptimally. Only a few studies take this approach. For instance, Coppens et al. \cite{coppens_balancing_2024} study when recommending a new activity versus reinforcing a habit is more beneficial. Our PUP and RM criteria provide a principled basis for estimating this disutility (as opposed to relying on confidence thresholds or randomization).


\subsection{Context-aware and location-based recommendation}

Point-of-interest (POI) recommendation has grown into a substantial subfield, driven by the availability of check-in data and the practical importance of location-based services. Recent surveys document the breadth of approaches, spanning collaborative filtering on check-in matrices, sequence-aware models that capture temporal visiting patterns, and graph neural networks that encode spatial relationships \cite{sanchez_context_2025}. Dietz et al.\ \cite{dietz_exploratory_2024} extend POI recommendation to health-promoting activities in urban parks, explicitly modeling the trade-off between activity relevance and spatial accessibility. S{\'a}nchez and Bellog{\'\i}n \cite{sanchez2023bias} identify geographical distance polarization as one dimension of bias in location-based RS, showing that standard ranking algorithms generally favor popular venues which are nearby. 

What distinguishes activity recommendation from conventional item recommendation is the cost of consumption. When a system recommends a product, the primary cost is the purchase price (which can often be recouped via returns); when it recommends a place or activity, the user also incurs travel time, monetary travel costs, and opportunity costs from foregone alternatives. This distinction is well understood in the travel behavior literature, where generalized cost models have been standard since the 1960s \cite{quarmby1967choice}, but it is largely absent from RS research. Gramsch-Calvo et al. \cite{gramsch-calvo_going_2025, gramsch-calvo_importance_2025} study leisure destination choice and show that willingness to travel varies substantially with the social context of the activity and the homophilic preferences of the decision-maker. Ji et al. \cite{ji_meet_2026} disentangle spatial, temporal, and social factors in joint leisure destination choice, confirming that travel cost is a first-order determinant of satisfaction that interacts with activity type.

The feedback loop between recommendations and urban mobility is another distinguishing feature. Mauro et al. \cite{mauro_urban_2025} develop a simulation framework showing that next-venue recommendations increase individual venue diversity but amplify collective inequality by concentrating visits on popular places. Perdomo et al. \cite{perdomo_performative_2020} formalize this as \emph{performative prediction}: when predictions shape the distribution they aim to model, standard retraining can fail to converge to a welfare-optimal equilibrium. Our ABM captures this dynamic by running feedback loops over multiple simulated days, allowing us to study how utility-based abstention or substitution interrupts the concentration effect.


\section{Utility-Based Recommendation Framework}
\label{sec:methods}

This section presents the formal framework that underpins our approach. We begin with the utility model that defines what it means for a recommendation to benefit a user (\S\ref{ssec:utility_model}), then introduce two criteria that operationalize this definition as recommendation policy constraints (\S\ref{ssec:welfare_criteria}), and finally describe how a recommender system can estimate the quantities these criteria require (\S\ref{ssec:rs_architecture}).

\subsection{Activity Utility Model}
\label{ssec:utility_model}

Let $\mathcal{A} = \{a_1, \ldots, a_K\}$ be a finite set of available activities, where each activity $a_k$ is characterized by an attribute vector $\mathbf{x}_k$ describing its intrinsic properties (type, quality, popularity, etc.) and a spatial location $\ell_k$. Agent $i$ is located at position $\ell_i$ and has a latent preference vector $\boldsymbol{\beta}_i$ drawn from a population distribution $P(\boldsymbol{\beta})$.

Our formulation is rooted in random utility theory \cite{mcfadden1973conditional} and the activity-based travel behavior literature \cite{bowman2001activity}. Specifically, we decompose the utility that agent $i$ derives from participating in activity $a_k$ into two components:
\begin{equation}
\label{eq:utility}
    U_{ik} = V_{ik} - C_{ik} + \epsilon_{ik}
\end{equation}
where $V_{ik}$ is the \emph{activity benefit} (the experienced value of the activity itself), $C_{ik}$ is the \emph{travel cost} (the generalized cost of reaching and returning from the activity location), and $\epsilon_{ik}$ is a random utility term capturing unobserved idiosyncratic factors. 

The activity benefit is defined as:


\begin{equation}
\label{eq:activity_benefit}
    V_{ik} = \boldsymbol{\beta}_{i}^{\top} \mathbf{x}_k
\end{equation}
where $\boldsymbol{\beta}_{i}^{\top} \mathbf{x}_k$ captures the match between agent $i$'s preferences and the activity's observable attributes (type, quality, popularity, interest match). All travel in this formulation is treated as derived demand: agents travel solely to reach the destination, and the trip itself carries no intrinsic, escapist, or positional value (for more discussion on travel motivations, we refer the reader to \cite{vanwee_meta-theory_2025}).

Following \cite{ben1985discrete}, the travel cost is defined as:
\begin{equation}
\label{eq:travel_cost}
    C_{ik} = \tau(\ell_i, \ell_k) \cdot v_i \cdot \beta_i^{\text{time}} + \kappa(\ell_i, \ell_k) \cdot \beta_i^{\text{cost}}
\end{equation}
where $\tau(\ell_i, \ell_k)$ is the expected travel time between the agent's current location and the activity, $v_i$ is the agent's value of time, $\kappa(\ell_i, \ell_k)$ is the monetary cost of the trip (fuel and parking), and $\beta_i^{\text{time}}$ and $\beta_i^{\text{cost}}$ are the agent's sensitivity weights for time and cost respectively. All agents travel by car, so travel time depends on prevailing road congestion, making $C_{ik}$ endogenous to aggregate behavior in the simulation (see \S\ref{sec:simulation}). Heterogeneity in how burdensome travel feels is captured through variation in $v_i$ (which depends on income) and the preference weights (which depend on psychological traits such as budget sensitivity and planning orientation).

This two-component structure is central to the paper's argument. A recommendation can match an agent's preferences well ($V_{ik}$ is high) yet still produce negative net utility if the activity is far away and the travel cost outweighs the benefit. Standard RS optimize for engagement signals that correlate with $V_{ik}$ but are largely uninformed about $C_{ik}$. The welfare criteria we propose next address precisely this gap.

\subsection{Utility-Based Criteria}
\label{ssec:welfare_criteria}

We define two criteria that a recommender can evaluate before issuing a recommendation. Both criteria use the same underlying utility model but differ in the question they ask.

\subsubsection{Positive Utility Probability (PUP)}

PUP asks whether the recommended activity is likely to leave the user better off than not participating at all. Formally, a recommendation $r(i) = a_k$ satisfies the PUP criterion at threshold $\alpha$ if:
\begin{equation}
\label{eq:pup}
    \Pr\!\left[U_{ik} \geq 0 \;\middle|\; \hat{\boldsymbol{\beta}}_i\right] \geq \alpha
\end{equation}
where $\hat{\boldsymbol{\beta}}_i$ is the recommender's estimate of agent $i$'s preference vector, and the probability is taken over the residual uncertainty in $\boldsymbol{\beta}_i$ given $\hat{\boldsymbol{\beta}}_i$ and over the random utility term $\epsilon_{ik}$.

Substituting the two-component decomposition, the condition becomes:
\begin{equation}
\label{eq:pup_decomposed}
    \Pr\!\left[V_{ik} + \epsilon_{ik} \geq C_{ik} \;\middle|\; \hat{\boldsymbol{\beta}}_i\right] \geq \alpha
\end{equation}

This formulation makes the role of travel cost transparent: PUP is harder to satisfy for distant activities, for agents with high values of time, and for activities where the preference match is uncertain. Setting $\alpha = 0.5$ recovers a recommendation policy that issues any suggestion expected to produce non-negative utility on average; higher values of $\alpha$ impose progressively stricter confidence requirements.

When the PUP criterion is not met, the recommender can either abstain entirely (letting the agent make an organic choice) or substitute a more conservative recommendation---for example, a closer alternative with a higher probability of non-negative utility.

\subsubsection{Regret Minimization (RM)}

RM asks a complimentary question: is the recommended activity close to the best the user would have found on their own? Define the regret of recommending activity $a_k$ to agent $i$ as:
\begin{equation}
\label{eq:regret}
    R_{ik} = \max_{j \in \mathcal{A}} \left[U_{ij}\right] - U_{ik}
\end{equation}
where $\max_{j \in \mathcal{A}} [U_{ij}]$ is the utility agent $i$ would have obtained from their best available activity had they chosen optimally over the full set $\mathcal{A}$. Regret is always non-negative: $R_{ik} = 0$ when the recommendation coincides with the agent's true optimum, and $R_{ik} > 0$ when a better option exists.

A recommendation satisfies the RM criterion at tolerance $\epsilon$ if:
\begin{equation}
\label{eq:rm}
    \mathbb{E}\!\left[R_{ik} \;\middle|\; \hat{\boldsymbol{\beta}}_i\right] \leq \epsilon
\end{equation}
where the expectation is taken over the joint distribution of $(\epsilon_{ij})_{j \in \mathcal{A}}$ and the residual uncertainty in $\boldsymbol{\beta}_i$ given $\hat{\boldsymbol{\beta}}_i$.

PUP and RM protect against different failure modes. PUP guards against recommendations that are absolutely harmful (negative net utility). RM guards against recommendations that are relatively inferior (positive utility, but less than what the agent would have achieved independently). A recommendation can pass PUP but fail RM if the agent benefits from the activity, but would have benefited more from a different one. Conversely, a recommendation can fail PUP but pass RM in degenerate cases where all available options produce negative utility.

\subsubsection{Decision rules}

Both criteria define a family of recommendation policies indexed by their respective thresholds. Let $\hat{V}_{ik}(\hat{\boldsymbol{\beta}})$ denote the recommender's engagement-based score for activity $a_k$ given its noisy estimate of agent $i$'s preferences. Under a standard (unconstrained) RS, the recommendation is:
\begin{equation}
\label{eq:standard_rs}
    r^{\text{std}}(i) = \arg\max_{k \in \mathcal{A}} \;  \hat{V}_{ik}(\hat{\boldsymbol{\beta}})
\end{equation}

Under PUP-constrained recommendation:
\begin{equation}
\label{eq:pup_policy}
    r^{\text{PUP}}(i) = \begin{cases}
        \arg\max\limits_{k \in \mathcal{A}_\alpha} \hat{V}_{ik}(\hat{\boldsymbol{\beta}}) & \text{if } \mathcal{A}_\alpha \neq \emptyset \\[4pt]
        \varnothing & \text{otherwise}
    \end{cases}
\end{equation}
where $\mathcal{A}_\alpha = \{a_k \in \mathcal{A} : \Pr[U_{ik} \geq 0 \mid \hat{\boldsymbol{\beta}}_i] \geq \alpha\}$ is the set of activities that pass the PUP threshold. The RS selects the highest-scoring activity from the admissible set; if no activity qualifies, it abstains.


Under RM-constrained recommendation:
\begin{equation}
\label{eq:rm_policy}
    r^{\text{RM}}(i) = \begin{cases}
        \arg\max\limits_{k \in \mathcal{A}_\epsilon} \hat{V}_{ik} (\hat{\boldsymbol{\beta}}) & \text{if } \mathcal{A}_\epsilon \neq \emptyset \\[4pt]
        \varnothing & \text{otherwise}
    \end{cases}
\end{equation}
where $\mathcal{A}_\epsilon = \{a_k \in \mathcal{A} : \mathbb{E}[R_{ik} \mid \hat{\boldsymbol{\beta}}_i] \leq \epsilon\}$. For both RM- and PUP-constrained recommendations, if at least one recommendation survives the filtering for the chosen activity subtype, the model does not abstain.

To formalize the trade-off inherent in welfare gating, let $W_{ik}$ represent a generalized scalar welfare score for activity $a_k$ recommended to agent $i$. The recommender system applies a confidence threshold $\theta \in [\theta_{\min}, \theta_{\max}]$, defining the admissible set of recommendations as $\mathcal{A}_\theta = \{a_k \in \mathcal{A} : W_{ik} \ge \theta\}$. Both PUP and RM are special cases of this formulation. As $\theta$ increases, the admissible set $\mathcal{A}_\theta$ monotonically shrinks. We can now show the following:

\begin{proposition}
Let $p(\theta) = \Pr[\mathcal{A}_\theta \neq \varnothing]$ denote the probability that at least one candidate survives the welfare filter at strictness threshold $\theta$. Let $\mu(\theta) = \mathbb{E}[U_{ik^*(\theta)} \mid \mathcal{A}_\theta \neq \varnothing]$ denote the expected net utility of the issued recommendation, conditional on the RS not abstaining. Let $\bar{U}^{\text{org}}$ represent the mean net utility under organic choice. The unconditional mean net utility under a gating policy at threshold $\theta$ is:
$$\bar{U}(\theta) = p(\theta) \mu(\theta) + (1 - p(\theta)) \bar{U}^{\text{org}}$$

Assume the unconstrained baseline RS provides value over organic choice ($\mu(\theta_{\min}) > \bar{U}^{\text{org}}$) and the residual utility uncertainty is normally distributed. For any filtering criterion where the admissible set is monotonically decreasing in $\theta$, $\bar{U}(\theta)$ is strictly quasi-concave on $[\theta_{\min}, \theta_{\max}]$, and there exists a unique interior optimal threshold $\theta^* \in (\theta_{\min}, \theta_{\max})$ that maximizes expected user welfare.  
\end{proposition}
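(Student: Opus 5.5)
The decomposition formula for $\bar{U}(\theta)$ comes first and is immediate. By the law of total expectation, conditioning on the event $\{\mathcal{A}_\theta \neq \varnothing\}$ gives it. When the filter lets something through, the agent receives $U_{ik^*(\theta)}$. When the RS abstains, the agent makes an organic choice with mean $\bar{U}^{\text{org}}$, and we take this to be independent of the abstention event, which is an assumption I will state explicitly. The substantive claim is the shape of $\bar{U}(\theta)$, and for that I would parametrize everything through the Gaussian residual.

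Under the normality assumption, write the issued recommendation's net utility as $U = m + \sigma Z$ with $Z\sim N(0,1)$, and model gating at strictness $\theta$ as admitting only when the standardized welfare score exceeds a cutoff $c(\theta)$. Both PUP ($\Pr[U\ge 0]=\Phi(m/\sigma)\ge\alpha$) and RM reduce to such a cutoff, and $c$ is increasing in $\theta$ by the monotone shrinking of $\mathcal{A}_\theta$. Then:
\begin{itemize}
\item $p(\theta)=1-\Phi(c(\theta))$, which is strictly decreasing.
\item $\mu(\theta)=m+\sigma\,\lambda(c(\theta))$, where $\lambda=\phi/(1-\Phi)$ is the inverse Mills ratio; this is strictly increasing because $\lambda'\in(0,1)$.
\end{itemize}
Reparametrizing by $c$, the welfare is
$$\bar{U}(c)=\bar{U}^{\text{org}}+\bigl(1-\Phi(c)\bigr)\bigl(m-\bar{U}^{\text{org}}\bigr)+\sigma\phi(c).$$
Its derivative is
$$\bar{U}'(c)=-\phi(c)\bigl[(m-\bar{U}^{\text{org}})+\sigma c\bigr].$$
This is positive exactly when $c<c^\dagger:=(\bar{U}^{\text{org}}-m)/\sigma$ and negative afterwards. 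A single sign change from $+$ to $-$ gives strict quasi-concavity in $c$, and this transfers to $\theta$ because $c$ is a strictly increasing reparametrization. The maximizer is unique, namely $\theta^*=c^{-1}(c^\dagger)$. It has a clean interpretation: filter exactly those recommendations whose expected utility falls below organic choice.

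Interiority is the main obstacle. The hypothesis $\mu(\theta_{\min})>\bar{U}^{\text{org}}$ concerns the conditional mean at the loosest threshold. To use it, I need $c(\theta_{\min})<c^\dagger$, which says the derivative is positive at the left endpoint. That holds if the baseline admits essentially everything, so that $c(\theta_{\min})\to-\infty$, or more generally if the marginal admitted item at $\theta_{\min}$ is still below organic utility. The hypothesis as stated does not guarantee this. I would therefore make it explicit, either as the condition that the baseline's marginal recommendation is worse than organic choice, or through an interpretation in which $c(\theta_{\min})$ lies far enough left.

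For the right endpoint I need $c(\theta_{\max})>c^\dagger$. I would obtain this by taking $\theta_{\max}$ strict enough that $p(\theta_{\max})\to 0$, i.e. $c\to\infty$. Then, as a sanity check, $\bar{U}(\theta_{\max})\to\bar{U}^{\text{org}}<\bar{U}(\theta^*)$.

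I expect the honest version of the proof to require these two boundary conditions as added assumptions. I would also assume that the heterogeneity in $m$ across agent–activity pairs is handled either pointwise, or by noting that a mixture over $m$ of such derivative expressions is not automatically single-crossing. In the mixture case, quasi-concavity is the delicate point, and I would restrict to the homogeneous-location case or add a log-concavity assumption on the mixing distribution.
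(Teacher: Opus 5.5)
Your argument is correct within the model you set up, and it takes a different route from the paper's.

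\textbf{How the two routes differ.} The paper stays abstract. It takes $p(\theta_{\min})=1$ and $p(\theta_{\max})=0$ ``by definition'' and differentiates to get $\bar{U}'=p\mu'+(\mu-\bar{U}^{\text{org}})p'$. It signs this at the ends: $p'(\theta_{\min})\to 0$ on the left, and $\bar{U}\downarrow\bar{U}^{\text{org}}$ on the right. It then obtains a critical point from the Intermediate Value Theorem, and asserts uniqueness because the normal density is log-concave and so has a monotone hazard rate. You instead commit to a Gaussian truncation model and read single crossing directly off $\bar{U}'(c)=-\phi(c)\bigl[(m-\bar{U}^{\text{org}})+\sigma c\bigr]$.

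\textbf{What your route buys.}
\begin{itemize}
\item Uniqueness is actually derived. The paper never shows how log-concavity forces a single zero of $\bar{U}'$.
\item The maximizer is characterized: it is the threshold at which the marginal admitted item is worth exactly $\bar{U}^{\text{org}}$.
\item Your diagnosis of the hypothesis is right. Since $\lambda(c)>c$, any $c_{\min}\ge c^\dagger$ gives $\mu(\theta_{\min})=m+\sigma\lambda(c_{\min})>m+\sigma c^\dagger=\bar{U}^{\text{org}}$. So the stated hypothesis can hold while $\bar{U}$ decreases on the whole interval.
\end{itemize}
Your two boundary conditions are exactly what the paper imports through $p(\theta_{\min})=1$ and $p(\theta_{\max})=0$. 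With admit-all at $\theta_{\min}$, the unbounded Gaussian support gives $c_{\min}<c^\dagger$ whatever the sign of $m-\bar{U}^{\text{org}}$. So $\mu(\theta_{\min})>\bar{U}^{\text{org}}$ is not what produces interiority. The paper's left-endpoint step also tacitly needs $\mu'(\theta_{\min})>0$, which it never shows; your sign analysis on $\{c<c^\dagger\}$ avoids this. The paper's route buys only nominal generality over filtering criteria.

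\textbf{Score versus residual.} Because your filter cuts on $Z$, $Z$ must be the cross-candidate variation in the score the recommender observes. Realized utility is then that score plus noise independent of admission. $Z$ cannot be the residual $\epsilon_{ik}$, which the filter never sees. So you are really assuming a Gaussian score distribution, not the Gaussian residual in the statement; the paper's hazard-rate step makes the same jump. In fact, take a cutoff $u$ on any score $S$ with $\mathbb{E}[U\mid S]=S$ and density $f>0$. Then $d\bar{U}/du=-f(u)\,(u-\bar{U}^{\text{org}})$. Single crossing therefore needs no normality at all; normality only supplies the closed form and the unbounded support used for interiority.

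\textbf{Where heterogeneity actually matters.} Your worry about mixtures over $m$ is misplaced. Your bracket is affine in $m$, and with a common utility-level cutoff every component contributes a nonnegative multiple of $-(u-\bar{U}^{\text{org}})$. What can break single crossing is a score that is not a common monotone function of expected utility. Two examples:
\begin{itemize}
\item PUP with heterogeneous $\sigma_i$. Here $\Pr[U_{ik}\ge 0]\ge\alpha$ means $\hat{V}_{ik}-\hat{C}_{ik}\ge\sigma_i\Phi^{-1}(\alpha)$, a user-specific utility level.
\item RM, whose cutoff is relative to each agent's own best option.
\end{itemize}
In these cases the items removed at a given $\theta$ carry different utilities, and $\bar{U}$ can have several local maxima. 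This is consistent with the two-peaked PUP frontier the paper itself reports. Your claim that both PUP and RM ``reduce to such a cutoff'' therefore needs this alignment restriction stated as an explicit assumption.
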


\begin{proof}
See Appendix~\ref{app:proof}.
\end{proof}

These policies have two effects. First, \emph{filtering}: activities with high engagement scores but poor expected welfare are removed from consideration. Second, \emph{re-ranking}: among the surviving candidates, the standard engagement-based ordering is preserved, but the effective top recommendation may change because formerly top-ranked options have been filtered out. This means the system can shift toward more conservative recommendations (e.g., closer, better-matched activities) rather than abstaining entirely.

\subsection{Recommender Architecture}
\label{ssec:rs_architecture}

To evaluate PUP and RM, the recommender must estimate $V_{ik}$ and $C_{ik}$ for each candidate activity. We describe how each component is estimated and how the estimates are combined into the welfare criteria.

\subsubsection{Estimating activity benefit}

The RS maintains a personalization model for each user, updated from post-activity feedback. The estimated activity benefit for a candidate activity $a_k$ is constructed as a weighted sum of the platform's \emph{base score} $B_k$ and the \emph{personalization score} $P_{ik}$:
\begin{equation}
\label{eq:v_hat}
    \hat{V}_{ik} = (1 - \lambda) \cdot B_k + \lambda \cdot P_{ik}
\end{equation}
where $\lambda \in [0, 1]$ is a personalization weight. The base score $B_k$ captures aggregate quality signals (e.g., ratings, review volume, proximity), while the personalization score $P_{ik}$ is derived from the accumulated affinities. We distinguish two types of user-level affinity scores: a \emph{place affinity} $\phi_{ik}^{\text{place}}$ for previously visited places, and a \emph{keyword affinity} $\phi_{iw}^{\text{kw}}$ for content descriptors (e.g., ``outdoor,'' ``Italian,'' ``live music''):
\begin{equation}
\label{eq:personalization}
    P_{ik} = \tfrac{1}{2}\!\left(\Omega \cdot \phi_{ik}^{\text{place}} + (1-\Omega) \cdot \bar{\phi}_{i,\text{kw}(k)} + 1\right),
\end{equation}
where $\bar{\phi}_{i,\text{kw}(k)}$ is the mean keyword affinity across the keywords associated with place $k$ and the parameter $\Omega \in [0,1]$ controls the relative weight on place-level versus keyword-level affinities. The specific platform instantiations of $B_k$ as well as values of $\Omega$ used in our simulation are described in Appendix~\ref{app:rs_stack}.

After user $i$ completes activity $a_k$, the system receives a feedback signal $f_{ik}$, and affinities are updated incrementally:
\begin{equation}
\label{eq:affinity_update}
    \phi_{ik}^{(t+1)} = \phi_{ik}^{(t)} + \gamma \cdot f_{ik}
\end{equation}
where $\gamma$ is a learning rate. The feedback signal $f_{ik}$ may range from a simple binary like/dislike to a continuous satisfaction measure; in our instantiation (\S\ref{ssec:feedback}), both channels are used.

The RS also estimates a noisy measure for travel cost via Eq.~\ref{eq:travel_cost}, where $\hat{v}_i$, $\hat{\beta}_i^{\text{time}}$, and $\hat{\beta}_i^{\text{cost}}$ are learned from the user's feedback history. Conceptually, users who consistently report dissatisfaction with distant activities despite high preference match are inferred to have higher time sensitivity. The system maintains a per-user uncertainty estimate $\sigma_i$ that decreases with the number of feedback observations, which feeds into the PUP computation described in Appendix~\ref{app:computing_PUP}. Implementation details of the learning mechanism (asymmetric update rates and bounds) are provided in Appendix~\ref{app:tce}.




\section{Simulation Model}
\label{sec:simulation}

To evaluate the welfare-oriented criteria introduced in \S\ref{sec:methods}, we require a testbed in which (i) agents make activity and travel decisions under heterogeneous behavioral assumptions, (ii) recommender systems intervene with suggestions that carry real travel costs, (iii) what each agent would have experienced absent the recommendation (the counterfactual) is observable, and (iv) the feedback loop between agents and recommenders unfolds over multiple decision episodes. No public observational dataset satisfies all four requirements simultaneously. We therefore construct an agent-based model (ABM) grounded in the Meta-Theory for Travel Choices (MTTC) \cite{vanwee_meta-theory_2025}, in which a heterogeneous population of synthetic travelers interacts with a set of recommender systems within a spatial urban environment. This section describes each component.

\subsection{Agent specification}
\label{ssec:agents}

Each agent represents a traveler characterized by personal characteristics, a decision-making paradigm, and contextual factors, following the MTTC building blocks.

\subsubsection{Personal characteristics and behavioral heterogeneity}

Agents are endowed with socio-demographic attributes (age, income) that shape their preferences. Income determines value of time (VOT, $v_i$) as a fraction of the agent's hourly wage rate, $v_i = \gamma_w \cdot (\text{income}_i / H)$, where $H$ is the annual work hours and $\gamma_w \in (0,1]$ is a scaling factor (see Appendix~\ref{app:value_of_time} for calibration). Mode choice is fixed (i.e., all agents travel by car) to isolate the effect of recommendation policy on welfare outcomes.

Beyond demographics, each agent carries a Big Five personality vector (openness, conscientiousness, extraversion, agreeableness, neuroticism) and four latent psychological variables: \emph{maximization tendency} (the disposition to search exhaustively for the best option rather than settle for a satisfactory one), \emph{trust in platforms} (baseline willingness to rely on algorithmic recommendations), \emph{autonomy preference} (desire to make independent decisions), and \emph{algorithmic awareness} (understanding of how RS operate). These traits shape behavioral parameters such as risk aversion, budget sensitivity, and the dynamic willingness to accept recommendations ($\eta_i$, \S\ref{ssec:rec_interaction}). This design is empirically motivated by Shi et al. \cite{shi_antecedents_2021}, who show that travelers process RS recommendations through both cognitive and emotional trust channels. Our trust-in-platforms and autonomy-preference variables capture this heterogeneity. The full mapping from personality traits to behavioral coefficients is provided in Appendix~\ref{app:psychological_traits}.

\subsubsection{Decision-making paradigms}

Agents differ in how they evaluate and select alternatives, following the MTTC's recognition that not all travelers are utility maximizers. The model implements five paradigms: \emph{utility maximizers} select the option with the highest expected utility \cite{mcfadden1973conditional}; \emph{regret minimizers} select the option with the lowest anticipated regret relative to foregone alternatives \cite{loomes1982regret}; \emph{prospect-theory agents} overweight potential losses relative to gains \cite{kahneman_prospect_1979}; \emph{satisficers} accept the first option that exceeds a personal acceptability threshold \cite{caplin2011search}; and \emph{habitual agents} repeat their most recent choice unless it becomes unavailable \cite{gonzalez2008understanding}.


\subsection{Spatial environment}
\label{ssec:city}

Agents operate in a synthetic grid-based city with four zone types (residential, employment, leisure, mixed), each populated with synthetically generated points of interest (POIs) drawn from category distributions appropriate to the zone type. Road congestion is endogenous: traffic volumes at each time step feed into a Bureau of Public Roads (BPR) volume-delay function that increases travel times, making travel cost $C_{ik}$ responsive to aggregate behavior. Full details of the spatial model, POI generation, and congestion mechanics are provided in Appendix~\ref{app:spatial}.

\subsection{Daily activity scheduling}
\label{ssec:scheduling}

Agents follow a daily schedule that begins with a home-to-work commute trip, followed by a leisure activity after work. The choice of leisurely activity follows a two-stage scheduling process. First, agents choose a  activity subtype to partake in. Then, agents interact with RS to get recommendations for that subtype, which they either choose to accept or disregard, instead making an organic choice. 



\subsubsection{Activity type and location selection}

After work, each agent evaluates the net utility of seven leisure activity types: dine-in food, takeout food, live music, fitness, caf\'{e} meetups, museums, and parks. For each segment~$s$, the agent computes $U_{is}$ following the two-component utility model of Eq.~\ref{eq:utility}, where $V_{is}$ incorporates base utility, personality adjustments, and interest match, and $C_{is}$ captures the expected travel disutility by car to the candidate location (which varies across agents through their time and cost preference weights, as specified in Eq.~\ref{eq:travel_cost}). The agent selects among segments via a softmax over net utilities.

\subsubsection{Recommendation interaction}
\label{ssec:rec_interaction}

For each leisure segment, the recommender stack produces a ranked list of candidate places. The agent observes the top recommendation and evaluates a dynamic \emph{willingness-to-accept} parameter $\eta_i \in [0, 1]$, which determines the probability of following the recommendation. We describe the full parameterization in Appendix~\ref{app:eta}.

If the agent accepts the recommendation (with probability $\eta_i$), the leisure activity's location is set to the recommended place and the recommendation source is recorded. If the agent declines, they proceed with an organic location choice. This mechanism means the RS influences but does not determine the agent's behavior, consistent with empirical evidence that recommendation acceptance is partial and context-dependent \cite{ninomiya_determinants_2025}.

\subsection{Feedback and learning}
\label{ssec:feedback}

After completing a leisure activity, agents generate feedback on accepted recommendations. A composite feedback signal combines the experienced activity utility and travel utility, adjusted by preference fit and the agent's feedback sensitivity. This signal is transmitted through two channels. First, it passes through a sigmoid function to produce a stochastic binary like/dislike outcome (with feedback strength proportional to the agent's sensitivity), which updates the originating RS platform's personalization model via the affinity learning rule in Eq.~\ref{eq:affinity_update}. Second, a continuous feedback record is generated containing the satisfaction score $s_{ik} \in [-1, 1]$, the realized activity utility, travel utility, and generalized travel cost. This continuous signal updates the per-user VOT parameters, enabling the welfare layer to refine its estimates of $\hat{V}_{ik}$ and $\hat{\beta}_i^{\text{time}}$ over time.

The welfare estimation quality improves over time as the RS accumulates feedback. Early in a user's interaction history, the estimates $\hat{V}_{ik}$ and $\hat{\beta}_i^{\text{time}}$ are uncertain, leading to wider confidence intervals and more frequent abstention under strict PUP or RM thresholds. As feedback accumulates, the estimates tighten and the RS can recommend with greater confidence. This creates a natural exploration--exploitation dynamic: the system is conservative when uncertain and becomes more assertive as it learns.  

\section{Experimental Design and Results}
\label{sec:results}

We evaluate the welfare-oriented RS framework through three analyses. First, we compare aggregate welfare metrics across recommendation conditions and trace the PUP and RM threshold frontiers (\S\ref{ssec:aggregate}). Second, we disaggregate outcomes by decision paradigms to assess distributional effects (\S\ref{ssec:heterogeneity}). Third, we quantify the cost of over-recommendation (\S\ref{ssec:orc}). Unless otherwise noted, experiments use 250 agents on a 18$\times$18 city grid over 60 simulated days, averaged across 5 random seeds. The first 59 days are used to train the RS for personalization through feedback, while the last day is used to assess the performance of each recommendation policy. Each treatment arm is run with the same set of seeds, ensuring that any difference in outcomes is attributable to the recommendation policy rather than to stochastic variation in agent characteristics or scheduling. The simulation code can be found \href{https://github.com/ekinugurel/welfare_oriented_rec_sys/blob/main/paper_figures_and_tables.ipynb}{here}.

We compare five recommendation policies:

\begin{enumerate}
    \item \textbf{No RS}: Agents select leisure segments and locations entirely from their own preferences and spatial proximity. This condition establishes the counterfactual: what agents would experience without algorithmic intervention.

    \item \textbf{Standard RS}: The set of RS operates as described in \S\ref{ssec:rs_architecture} and Appendix~\ref{app:rs_stack}, ranking candidates by engagement signals (prominence, relevance, proximity, popularity, personalization) without welfare constraints.

    \item \textbf{PUP-constrained RS}: The recommender applies the Positive Utility Probability filter (\S\ref{ssec:welfare_criteria}) before issuing recommendations. Only candidates satisfying $\Pr[U_{ik} \geq 0 \mid \hat{\boldsymbol{\beta}}_i] \geq \alpha$ are eligible; the surviving candidates are then ranked by the same heuristics Standard RS uses. 

    \item \textbf{RM-constrained RS}: The recommender applies the Regret Minimization filter. Only candidates satisfying $\mathbb{E}[R_{ik} \mid \hat{\boldsymbol{\beta}}_i] \leq \epsilon$ are eligible; the surviving candidates are then ranked by the same heuristics Standard RS uses. The threshold $\epsilon$ is calibrated from the empirical distribution of expected regret under the Standard RS condition.

    \item \textbf{Oracle RS}: A hypothetical recommender with perfect knowledge of each agent's true preference vector $\boldsymbol{\beta}_i$, including their value of time, time and cost sensitivities, leisure preferences, and subtype-specific utility bonuses. It computes the exact net utility $U_{ik} = V_{ik} - C_{ik}$ for every candidate and returns the optimal ranking. This condition provides an upper bound on achievable welfare.
\end{enumerate}

\subsection{Aggregate welfare comparison and threshold analysis}
\label{ssec:aggregate}

We first evaluate the welfare-oriented framework from a systemic perspective, analyzing whether PUP and RM constraints lift the overall performance and equity of the recommendation ecosystem. Table~\ref{tab:aggregate_results} reports aggregate welfare metrics across recommendation conditions. The existence of an activity RS improves mean leisure-trip utility by $0.04$ but barely reduces the negative trip-rate (fraction of trips with $U < 0$), indicating that many recommendations incur travel costs that still offset activity gains. Among welfare-oriented conditions, PUP-0.6 and RM ($\epsilon{=}0.94$) achieve the highest mean utility ($\bar{U} = 0.231$ and $0.242$, respectively) while abstaining on 45-62\% of recommendations. Both conditions also reduce the negative-trip rate to $0.228-0.248$ and improve upon the Standard RS in Gini coefficient, suggesting that selective recommendation improves welfare without concentrating gains among a narrow subset of agents. The Oracle upper bound ($\bar{U} = 0.377$) confirms substantial headroom. Welfare-oriented recommendations close roughly two-thirds of the gap between Standard RS and Oracle. 

\begin{table}[!]
\centering
\caption{Aggregate welfare metrics by recommendation condition. $\bar{U}$: mean net trip utility; $\sigma_U$: standard deviation across seeds; Neg.\ rate: fraction of trips with $U < 0$; Abst.: fraction of recommendation opportunities where the RS withheld a suggestion; Gini: inequality in the trip-utility distribution.}
\label{tab:aggregate_results}
\small
\begin{tabular}{@{}lccccc@{}}
\toprule
\textbf{Condition} & $\bar{U}$ & $\sigma_U$ & \textbf{Neg.\ rate} & \textbf{Abst.} & \textbf{Gini} \\
\midrule
No RS              & $0.052$ & $0.016$ & $0.415$ & $1.000$ & $0.159$ \\
Standard RS        & $0.089$ & $0.053$ & $0.367$ & $0.332$ & $0.151$ \\
PUP-0.3           & $0.214$  & $0.041$ & $0.256$ & $0.563$ & $0.155$ \\
\textbf{PUP-0.6}            & $\textbf{0.231}$  & $\textbf{0.035}$ & $\textbf{0.248}$ & $\textbf{0.620}$ & $\textbf{0.144}$ \\
\textbf{RM} ($\epsilon{=}\textbf{0.94}$) & $\textbf{0.242}$  & $\textbf{0.035}$ & $\textbf{0.228}$ & $\textbf{0.454}$ & $\textbf{0.142}$ \\
\bottomrule
Oracle & $0.377$ & $0.023$ & $0.176$ & $0.409$ & $0.159$\\
\bottomrule
\end{tabular}
\end{table}

Next, we vary the key parameters in PUP- and RM-constrained models to identify patterns and the best-performing variants. Specifically, we trace the PUP frontier by plotting mean net utility and abstention rate against $\alpha$ (Figure~\ref{fig:pup_frontier}), and similarly for RM against $\epsilon$ (Figure~\ref{fig:rm_frontier}). The PUP frontier reveals two optimal $\alpha$ values. Mean utility improves as $\alpha$ increases from 0 to 0.3, then stagnates until another maximum at $\alpha = 0.6$, and then declines as the RS increasingly abstains. The region near $\alpha = 0.3$ marks that of best welfare improvements without excessive abstention. Beyond $\alpha = 0.7$, the RS abstains on the majority of opportunities, and at $\alpha \geq 0.9$ it withholds all recommendations.

The RM frontier (Figure~\ref{fig:rm_frontier}) shows that a high regret ceiling ($\epsilon$ corresponding to the 95th percentile of the empirical regret distribution) achieves the highest mean net utility while also maintaining a reasonable abstention rate (around $45\%$). Tighter ceilings (50th--70th percentile) filter out too many possible alternatives, which also reduces mean net utility (as agents revert to their organic choices).

\begin{figure}[!]
\centering
\includegraphics[width=0.75\textwidth]{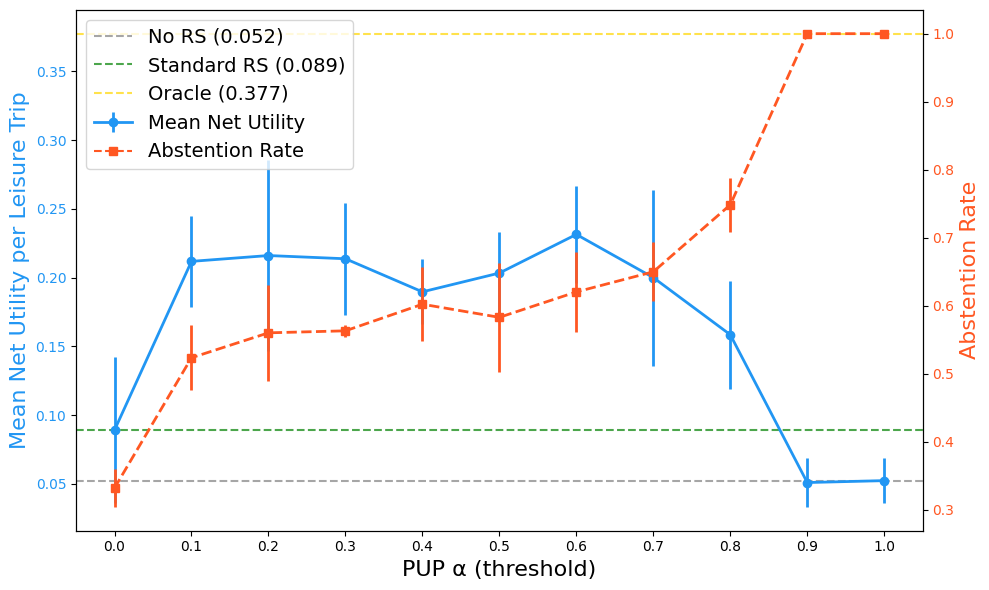}
\caption{PUP threshold frontier. Mean net utility (left axis) and abstention rate (right axis) as $\alpha$ varies. Horizontal lines mark the No RS, Standard RS, and Oracle baselines.}
\label{fig:pup_frontier}
\end{figure}

\begin{figure}[!]
\centering
\includegraphics[width=0.75\textwidth]{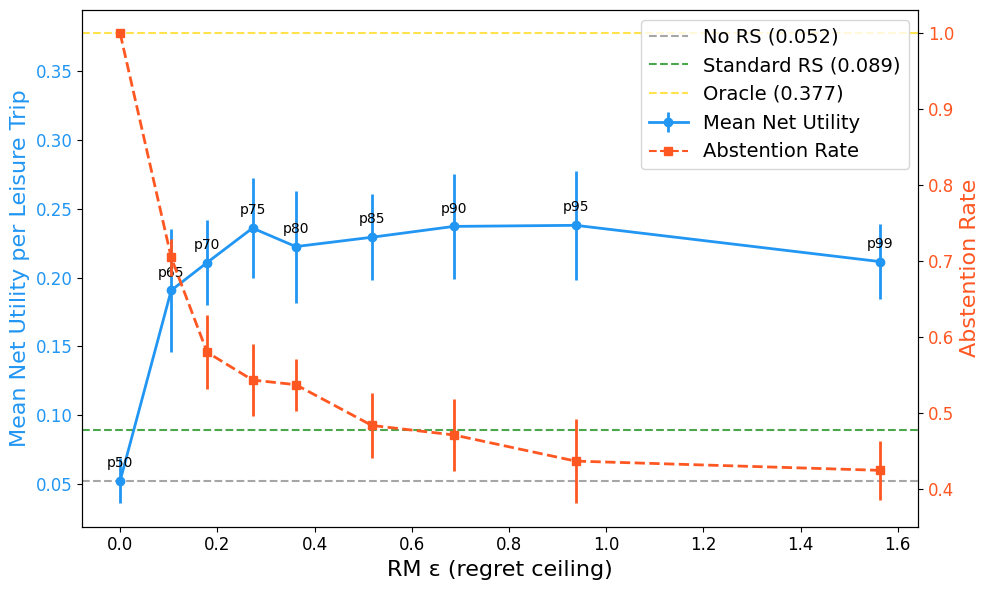}
\caption{RM threshold frontier. Mean net utility and abstention rate as the regret ceiling $\epsilon$ varies across calibration percentiles.}
\label{fig:rm_frontier}
\end{figure}

\subsection{Heterogeneity analysis}
\label{ssec:heterogeneity}

For each condition, we disaggregate mean utility by decision paradigm to test whether welfare-oriented recommendation benefits some behavioral types more than others. Figure \ref{fig:heterogeneity_paradigm} shows the change in mean utility relative to the No RS baseline, broken down by the five decision paradigms. The Oracle produces the largest gains uniformly, with prospect agents ($+0.263$) benefiting most and utility agents ($+0.091$) benefiting least. Under Standard RS, utility agents experience a \emph{decrease} in mean net utility, suggesting that unconstrained recommendations can actively harm agents who apply those heuristics. PUP- and RM-constrained models benefit satisficing and prospect agents the most. These findings support the thesis that, while the magnitude of benefit is sensitive to an agent's underlying decision-making style, the sign of the effect remains positive under welfare-oriented constraints.

\begin{figure}[!]
\centering
\includegraphics[width=0.75\textwidth]{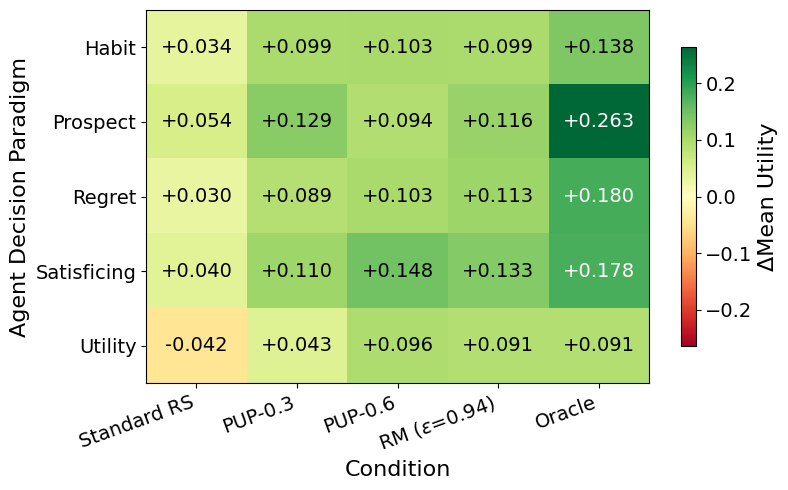}
\caption{Change in mean utility relative to organic baseline, by decision paradigm and condition.}
\label{fig:heterogeneity_paradigm}
\end{figure}



\subsection{Cost of over-recommendation}
\label{ssec:orc}

In \S\ref{ssec:aggregate} we established that welfare-oriented filtering raises aggregate utility. In this section, we turn to the micro-level cost of algorithmic misfires. An RS might look successful on average while still actively disrupting the plans of specific users. We quantify this harm by computing, for each accepted recommendation under the various RS conditions, whether the agent would have been better off under the No RS counterfactual. Specifically, let the \emph{over-recommendation cost} be defined as:
\begin{equation}
\label{eq:over_rec_cost}
    \text{ORC} = \frac{1}{|\mathcal{R}^{-}|} \sum_{(i,k) \in \mathcal{R}^{-}} \left(U_{ik}^{\text{organic}} - U_{ik}^{\text{rec}}\right),
\end{equation}
where $\mathcal{R}^{-}$ is the set of agent-trip pairs where the recommendation produced lower utility than the organic counterfactual, $U_{ik}^{\text{rec}}$ is the realized utility under the recommendation, and $U_{ik}^{\text{organic}}$ is the realized utility in the matched No RS run.

\begin{table}[!]
\centering
\caption{Over-recommendation cost (ORC) by condition.  Harmed \%: fraction of agents whose recommended trip yielded lower utility than their organic trip; Improved \%: fraction whose utility increased; ORC: mean utility loss per harmed agent (Eq.~\ref{eq:over_rec_cost}).}
\label{tab:over_rec}
\small
\begin{tabular}{@{}lccc@{}}
\toprule
\textbf{Condition} & \textbf{Harmed \%} & \textbf{Improved \%} & \textbf{Mean ORC} \\
\midrule
Standard RS             & 39.7\% & 43.8\% & $0.387 \pm 0.016$ \\
\textbf{PUP-0.3}                 & \textbf{31.4\%} & \textbf{50.4\%} & $\textbf{0.383} \pm \textbf{0.017}$ \\
PUP-0.6                 & 32.2\% & 51.0\% & $0.383 \pm 0.024$ \\
\textbf{RM ($\epsilon{=}0.94$)} & \textbf{31.3\%} & \textbf{51.8\%} & $\textbf{0.351} \pm \textbf{0.026}$ \\
\bottomrule
Oracle                  & 27.8\% & 53.4\% & $0.375 \pm 0.013$ \\
\bottomrule
\end{tabular}
\end{table}

Table~\ref{tab:over_rec} reports ORC metrics from the welfare experiment. The Standard RS harms $39.7\%$ of agents (nearly as many as it helps at $43.8\%$) and imposes the highest per-agent ORC ($0.387$). Welfare-oriented gating substantially reduces harm: RM ($\epsilon{=}0.94$) and PUP-0.3 lower the harmed fraction to $\approx 31\%$ and reduce the mean ORC by up to $9.30\%$, reflecting the value of withholding recommendations when expected utility is uncertain. The Oracle harms the fewest percentage of agents at $27.8\%$ while also improving the utility of $53.4\%$, confirming that better preference knowledge translates directly into fewer welfare-reducing recommendations. We note that the Oracle only optimizes place ranking under the utility model, but realized welfare also depends on the organic alternative and travel conditions (i.e., congestion), which the Oracle does not account for (hence why it also harms a fraction of the agents).



\section{Discussion and Conclusion}
\label{sec:discussion}

We have introduced a welfare-oriented framework for evaluating and filtering recommendations in the context of activity-travel behavior. The framework formalizes two utility-based criteria (PUP and RM) that give RS a principled basis for welfare-oriented suggestions. Our experimental results show that welfare-oriented gating reduces the fraction of harmed agents by roughly 30\% relative to the standard heuristic baseline, increases the mean net utility of leisurely trips in our synthetic simulation, and is useful for a variety of user decision-making paradigms in activity-travel behavior.

Our work speaks to a gap between the POI recommendation literature and the behavioral sciences in which these recommendations have systemic effects. Travel behavior research has long recognized that activity and destination choices are governed by heterogeneous preferences, generalized travel costs, and decision heuristics that vary across individuals and contexts~\cite{vanwee_meta-theory_2025}. Yet POI recommender systems are largely developed and evaluated within the information retrieval tradition. For recommender systems research to evolve in domains where its outputs carry real spatiotemporal costs, the field must engage with the behavioral disciplines that study these costs. The framework we propose is one step in that direction.

Several limitations should be acknowledged. First, the agent behavioral model, while grounded in the MTTC and parameterized with empirically motivated distributions, relies on synthetic agents rather than data from real travelers. The preference structures, personality-to-behavior mappings, and willingness-to-accept dynamics are plausible but have not been validated against individual-level behavioral data. Second, the spatial environment is a stylized grid city rather than a real urban network, and all agents travel by car, which limits generalizability to multi-modal settings. Finally, the welfare criteria assume that the recommender can estimate travel costs and preference parameters with reasonable accuracy. In practice, the quality of these estimates depends on data that platforms may not currently collect or that users may be unwilling to share. Our robustness checks in Appendix~\ref{app:robustness} stress-test the parameter choices in our simulation for these kinds of variations and find that, in 34 out of 35 cases, the mean utility ordering presented in our Table \ref{tab:aggregate_results} remains the same. 

The difficulty of studying RS welfare effects in isolation calls for mixed-methods approaches. User surveys and stated-preference experiments can inform the distributions from which synthetic agents are drawn (e.g., calibrating the relationship between personality traits and recommendation acceptance or eliciting willingness-to-travel thresholds for different activity types). Focus groups can surface qualitative dimensions of recommendation harm (frustration, trust erosion) that are difficult to capture in a utility function but that matter for long-term platform engagement. Integrating such empirical inputs into simulation frameworks like ours would strengthen both the internal validity of the model and the external validity of its policy implications. 

\bibliographystyle{plain}
\bibliography{acmart}

\newpage
\appendix
{\Huge\bfseries Appendices}
\section{Proof of Proposition 1}
\label{app:proof}

\begin{proof}
By definition, the non-abstention probability $p(\theta)$ is continuous and strictly decreasing in $\theta$, bounded by $p(\theta_{\min}) = 1$ (unconstrained recommendation) and $p(\theta_{\max}) = 0$ (strict filtering precluding all candidates). Because the filter progressively eliminates the lowest-scoring candidates, the conditional expected utility of a surviving recommendation, $\mu(\theta)$, is non-decreasing in $\theta$.

Differentiating $\bar{U}(\theta)$ with respect to $\theta$ yields:
\begin{equation}
    \bar{U}'(\theta) = p(\theta) \mu'(\theta) + (\mu(\theta) - \bar{U}^{\text{org}}) p'(\theta),
\end{equation}
\noindent where $p(\theta) \mu'(\theta)$ is strictly non-negative and represents the marginal welfare gain from filtering out sub-optimal options and $(\mu(\theta) - \bar{U}^{\text{org}}) p'(\theta)$ represents the marginal penalty of increased abstention. It is non-positive because $p'(\theta) \le 0$ and $\mu(\theta) \ge \mu(\theta_{\min}) > \bar{U}^{\text{org}}$.

Evaluating the derivative at the boundaries reveals a change in sign. At minimum strictness, $p(\theta_{\min}) = 1$ and $p'(\theta_{\min})$ approaches zero, leaving $\bar{U}'(\theta_{\min}) > 0$. The RS strictly improves upon organic choice. As $\theta \to \theta_{\max}$, universal abstention occurs ($p(\theta_{\max}) \to 0$), causing $\bar{U}(\theta)$ to converge to the organic baseline $\bar{U}^{\text{org}}$ from above, implying a negative trajectory.

Because $\bar{U}'(\theta)$ is continuous and changes sign from positive to negative, the Intermediate Value Theorem guarantees the existence of at least one interior critical point $\theta^* \in (\theta_{\min}, \theta_{\max})$ where $\bar{U}'(\theta^*) = 0$.

Finally, because the residual uncertainty $\epsilon_{ik}$ in the utility model is normally distributed, its probability density function is strictly log-concave. This ensures a monotonically increasing hazard rate for the welfare-score distribution, guaranteeing that the derivative $\bar{U}'(\theta)$ crosses zero exactly once. Therefore, the critical point $\theta^*$ is the unique global maximizer of expected user welfare.
\end{proof}

\section{Simulation Details}
\label{app:simulation}

\subsection{Recommender Stack Details}
\label{app:rs_stack}

The simulation implements a multi-platform recommender stack that mirrors the heterogeneous RS landscape users encounter in practice. Two platform types are instantiated, each producing a base score $B_k$ that is then blended with personalization via Eq.~\ref{eq:v_hat}. We check the robustness of all parametric assumptions (detailed below) in \S\ref{app:robustness}.

\subsubsection{Google Maps Replica} 
This platform computes the base score $B_k$ for candidate place $k$ as a convex combination of three normalized signals:
\begin{equation}
    B_k = w_{\text{prom}} \cdot \text{Prom}_k + w_{\text{rel}} \cdot \text{Rel}_k + w_{\text{prox}} \cdot \text{Prox}_k
\end{equation}
\noindent where the weights $w_{\text{prom}} = 0.45$, $w_{\text{rel}} = 0.35$, and $w_{\text{prox}} = 0.20$ are normalized to sum to one. \emph{Prominence} (\text{Prom}) combines the place's average rating (scaled to $[0,1]$) and the log-transformed review count (min-max normalized), weighted 0.55 and 0.45 respectively. \emph{Relevance} (\text{Rel}) is a Jaccard similarity between the place's keyword tags and the user's query keywords (75\% weight) plus interest keywords (25\% weight). \emph{Proximity} (\text{Prox}) applies exponential distance decay: $\text{Prox}_k = e^{-d_{ik}/d_0}$, where $d_{ik}$ is the Euclidean grid distance scaled to kilometers and $d_0 = 5$ km. The final score blends this base with personalization: $\hat{V}_{ik}^{\text{act}} = (1 - \lambda)\, B_k + \lambda\, P_{ik}$, with default $\lambda = 0.10$ (clamped to $[0, 0.35]$).

\subsubsection{Popularity-Based Recommender} 
This platform computes $B_k$ from three signals: average rating (scaled to $[0,1]$, weight 0.25), log-transformed review count (weight 0.35), and a general popularity signal based on total visit count (weight 0.40). Raw scores are min-max normalized before weighting. The default personalization weight is $\lambda = 0.12$ (also clamped to $[0, 0.35]$). This platform does not use proximity or relevance signals; it represents platforms like OpenTable or Ticketmaster where popularity and social proof dominate ranking \cite{kuo2015contextual, nilashi2018travelers}.

\subsubsection{Subtype Routing} 
We route each leisure subtype to the appropriate platform. Table~\ref{tab:rs_routing} shows the mapping. When multiple platforms serve a subtype (e.g., food), recommendations from each platform are returned separately and the agent selects from the merged set. This set of RS are selected from popular apps/companies operating in these domains within the continental United States in 2026.

\begin{table}[h]
\centering
\caption{Leisure subtype to recommender platform routing.}
\label{tab:rs_routing}
\small
\begin{tabular}{@{}ll@{}}
\toprule
\textbf{Leisure subtype} & \textbf{Platform(s)} \\
\midrule
Food (dine-in) & Google Maps replica + Popularity RS (i.e., OpenTable) \\
Food (takeout) & Google Maps replica + Popularity RS (i.e., OpenTable) \\
Live music & Popularity RS (i.e., Spotify) \\
Workout / fitness & Popularity RS (i.e., ClassPass) \\
Caf\'{e} & Google Maps replica \\
Museum & Google Maps replica\\
Park & Google Maps replica \\
\bottomrule
\end{tabular}
\end{table}

\subsubsection{Personalization Learning}
Both platform types share the same personalization mechanism (Eq.~\ref{eq:personalization}), in which we use $\Omega = 0.6$ to give higher weight to place affinity (motivated by recency bias in human mobility \cite{barbosa2015effect}. When a user completes an activity and provides feedback, place-level and keyword-level affinities are updated incrementally (Eq.~\ref{eq:affinity_update}) with a learning rate of $\gamma = 0.12$. Negative feedback applies a stronger update ($-\gamma$) than positive feedback ($+\gamma$), and keyword affinities update at 70\% of the place-level rate. Over the course of the simulation, this causes the personalization score $P_{ik}$ to drift toward places and content types that the agent has consistently enjoyed, and away from those that produced negative experiences.

\subsubsection{Computing PUP and RM}
\label{app:computing_PUP}
Given estimates $\hat{V}_{ik}$ and $\hat{C}_{ik}$, the RS computes the welfare criteria as follows.

For PUP, the RS estimates the probability that net utility is non-negative. Assuming the residual uncertainty $\epsilon_{ik} \sim \mathcal{N}(0, \sigma_i^2)$ with variance $\sigma_i^2$ learned from prediction errors in the feedback history:
\begin{equation}
\label{eq:pup_computation}
    \Pr\!\left[U_{ik} \geq 0\right] \approx \Phi\!\left(\frac{\hat{V}_{ik} - \hat{C}_{ik}}{\sigma_i}\right)
\end{equation}
where $\Phi(\cdot)$ is the standard normal CDF. The recommendation passes PUP at threshold $\alpha$ if this probability exceeds $\alpha$. 

For RM, the RS estimates expected regret by comparing the recommended activity against the set of candidates it would have surfaced in a standard ranking. Let $\hat{U}_{ij} = \hat{V}_{ij} - \hat{C}_{ij}$ for each candidate $j$. Then:
\begin{equation}
\label{eq:rm_computation}
    \hat{R}_{ik} = \max_{j \in \mathcal{A}} [\hat{U}_{ij}] - \hat{U}_{ik}
\end{equation}
The recommendation passes RM at tolerance $\epsilon$ if $\hat{R}_{ik} \leq \epsilon$. In practice, the RS can compute this efficiently over its top-$N$ candidates rather than the full activity set.



\subsection{Spatial Environment}
\label{app:spatial}

The city is a $G \times G$ grid (default $G = 18$) where each cell is assigned a zone type---residential, employment, leisure, or mixed---according to configurable probabilities. Block spacing determines the spatial scale (default 0.5~km per cell). Distance is computed as Euclidean distance scaled to kilometers.

Each zone produces a configurable number of places drawn from category distributions appropriate to its zone type. Leisure zones produce museums, parks, live music venues, and caf\'{e}s; mixed zones produce a broader category mix; residential zones produce caf\'{e}s, takeout, parks, and fitness venues. Each place is assigned a random rating (3.5--4.9), review count (20--5000), popularity score, and a keyword tuple. A required-category check ensures that all seven leisure subtypes have at least one matching POI.



\subsection{Travel Cost Estimator}
\label{app:tce}

The Travel Cost Estimator (TCE) learns each user's value of time $\hat{v}_i$ from continuous feedback via an exponential moving average (EMA) with learning rate $\eta_{\text{TCE}} = 0.15$. After each trip, the TCE observes the user's continuous satisfaction signal $s_{ik} \in [-1, 1]$ alongside the realized travel time and monetary cost. If the user reports dissatisfaction with a distant activity despite high preference match, the VOT estimate is adjusted upward by 5\%; if the user reports satisfaction with a longer trip, VOT is adjusted downward by 3\%. The asymmetry between these adjustments reflects the well-documented principle that negative experiences carry greater informational and psychological weight than comparably-sized positive experiences \cite{tversky1991loss, kahneman_prospect_1979, baumeister2001bad}. Dissatisfaction with a time-costly trip is a strong signal that the traveler's marginal disutility of travel time exceeds the current estimate, whereas satisfaction with a long trip is a weaker signal because it may reflect activity enjoyment rather than a genuinely low time cost. The resulting $5{:}3$ ratio lies within the range of empirically observed loss-aversion coefficients ($\approx 1.5$--$2.5$) reported in riskless-choice experiments \cite{tversky1991loss}, and it prevents the failure mode in which streams of mildly positive feedback on short trips cause the VOT estimate to collapse toward zero.

Estimates are bounded between \$5/hr and \$50/hr. The TCE also maintains a per-user uncertainty estimate $\sigma_i = \max(0.15, 1/\sqrt{n_i})$ that decreases with the number of observations $n_i$, and is used in PUP computation via Eq.~\ref{eq:pup_computation}. 

\subsection{Willingness-to-Accept Model}
\label{app:eta}

The dynamic willingness-to-accept parameter $\eta_i$ is computed as follows:
\begin{equation}
\label{eq:eta}
    \eta_i = \text{clamp}\!\left(\eta_i^{\text{base}} + \Delta^{\text{qual}} + \Delta^{\text{mem}},\; 0.05,\; 0.95\right),
\end{equation}
where $\eta_i^{\text{base}}$ is a baseline willingness calibrated from the agent's trust in platforms and autonomy preference, $\Delta^{\text{qual}}$ scales linearly with the RS's own score for the suggested place, and $\Delta^{\text{mem}}$ increases with the count of previously accepted recommendations. Concretely,
\begin{align*}
    \eta_i^{\text{base}} &= 0.52 + 0.40\,(T_i - 0.5) - 0.40\,(A_i - 0.5), \\
    \Delta^{\text{qual}} &= 0.18\,(S_k - 0.5), \\
    \Delta^{\text{mem}}  &= 0.015 \cdot \min(5,\; n_{\text{accepted}}),
\end{align*}
where $T_i$ and $A_i$ are the agent's trust-in-platforms and autonomy-preference latent variables (both $\in [0,1]$), $S_k$ is the normalised RS score of the suggested place, and $n_{\text{accepted}}$ is the agent's cumulative count of accepted recommendations (capped at five). The clamp bounds $[0.05,\,0.95]$ preserve a floor of exploration and a ceiling of skepticism.

\section{Agent Model}
\label{app:agent_model}

This appendix collects implementation details for the agents themselves: the cross-trait profile generator that produces each agent's demographics (\S\ref{app:synthetic_pop_gen}), the value of time calibration process (\S\ref{app:value_of_time}), and agents' psychological traits (\S\ref{app:psychological_traits}).

\subsection{Synthetic Population Generation}
\label{app:synthetic_pop_gen}

We construct a synthetic population using an orthogonal-factor design that minimizes cross-factor correlations \cite{box_statistics_2005}. This allows clear attribution of variation in recommendations to specific persona attributes. Each synthetic persona is defined over $18$ factors relevant to local leisure decisions, including sociodemographics (age, biological sex, income), home location (e.g., urban, suburban, rural), mobility resources (car/transit access), budget, time window preference, group composition (e.g., solo, with friends, with kids), among others. Each factor has 2-5 discrete levels, yielding a large combinatorial space from which we sample $250$ highly diverse personas. Table \ref{tab:factors} summarizes the full factor design used to generate our synthetic personas. 

\begin{table}[t]
\centering
\caption{Persona factor design and number of levels.}
\begin{tabularx}{\textwidth}{l|X|l}
\toprule
\textbf{Factor} & \textbf{Options} & \textbf{Levels} \\
\midrule
Age & \{teen, young adult, mid adult, old adult, senior\} & 5 \\
Biological Sex & \{male, female\} & 2 \\
Income & \{ $\leq$ \$35k, \$35-75k, \$75k-125k, \$125k+ \} & 4 \\
Home location & \{urban core, inner suburb, outer suburb, rural fringe\} & 4 \\
Car access & \{own car, carshare, no car\} & 3 \\
Transit access & \{high, medium, low\} & 3 \\
Budget (for an outing) & \{low, medium, high\} & 3 \\
\bottomrule
\end{tabularx}
\label{tab:factors}
\end{table}

\subsection{Value of time calibration}
\label{app:value_of_time}

Each agent's value of time is computed as $v_i = \gamma_w \cdot (\text{income}_i / H)$, where $\text{income}_i$ is the agent's annual income, $H = 2080$ is the assumed annual work hours, and $\gamma_w = 0.5$ is a scaling factor reflecting that leisure-trip travel time is typically valued at a fraction of the market wage rate. This choice is grounded in a large transportation-economics literature: Small's \cite{small2012valuation} review of empirical VOT estimates finds that the mean value of time for commuting and personal travel clusters around 50\% of the gross wage, and the U.S. Department of Transportation's guidance for benefit-cost analysis explicitly recommends valuing personal local travel at 50\% of the hourly median household income \cite{usdot_vot_2016}. Yan et al. \cite{yan_valuing_2025} further show that LLM-based synthetic travelers reproduce this $\approx$50\% VOT benchmark across a range of income and trip-purpose contexts. For an agent earning \$60{,}000/year, the resulting $v_i \approx \$14.4$/hr is therefore consistent with both empirical estimates and official practice. The parameter $\gamma_w$ can be adjusted to reflect different assumptions about the marginal utility of leisure time.

\subsection{Psychological trait model}
\label{app:psychological_traits}

Each agent carries a Big Five personality vector (openness, conscientiousness, extraversion, agreeableness, neuroticism) and four latent variables: maximization tendency, trust in platforms, autonomy preference, and algorithmic awareness. 

The latent traits modulate ten behavioral coefficients via a parameterized mapping: platform affinity, autonomy guard, planning orientation, social orientation, variety seeking, risk aversion, budget sensitivity, trend susceptibility, AI affinity, and feedback loop strength. Each coefficient is computed as a weighted sum of the relevant personality traits and latent variables, following the integrated choice and latent variable (ICLV) approach \cite{vij_how_2016}, which embeds psychometric indicators inside discrete-choice models so that unobserved attitudinal constructs can influence behavioral parameters in a theoretically grounded way. The behavioral coefficients, in turn, shift the agent's operational decision parameters: preference weights, attitude strengths, loss aversion, satisficing threshold, and the dynamic willingness to accept recommendations ($\eta$).

\section{Robustness Checks}
\label{app:robustness}

To assess how sensitive the paper's main conclusions are to parameter choices, we run a set of one-factor-at-a-time (OFAT) sensitivity sweeps grouped into three tasks. Set~1 varies knobs that govern the welfare-criteria uncertainty state (the belief standard deviation $\sigma_i$ used inside PUP and the TCE that learns each user's value of time). Set~2 varies knobs that govern how the base recommender stack learns from user feedback (personalization weights, keyword discount, feedback-strength clamps). Set~3 varies knobs that shape the base stack itself (GMaps prominence/relevance/proximity weights) and the synthetic place catalog it reads from (rating, review-count, popularity-multiplier distributions). Across the three tasks we run 35 single-knob perturbations.

Each configuration reuses the paper's full-scale simulation with a matched-seed paired-comparison design across five random seeds. For each configuration we re-run every treatment whose output the perturbed knob can affect. Reported cell means are averages over the five seeds; reported standard deviations are between-seed SDs. Paired-comparison metrics (\% harmed, Over-Recommendation Cost) are computed per seed against the matched No~RS counterfactual at the baseline configuration and averaged across seeds; absolute statistics (mean utility, negative-utility rate, Gini) are per-run statistics averaged across seeds directly.

Across the 35 single-knob perturbations,
the paper's core ordering (No~RS $<$ Standard~RS $<$ PUP $\approx$ RM $<$ Oracle) is preserved in 34 cases. The single exception is Set~1's \emph{K2-hi} (uncertainty-decay numerator set to 2.0, i.e. $\sigma$ inflated by a factor of two): RM's mean per-trip utility collapses to $+0.0646$, \emph{below} Standard~RS's $+0.0831$, and its percent-harmed rate rises to 38.96\% vs Standard RS's 38.0\%. This is RM's known failure mode: the $\varepsilon$-regret band compares a point estimate of regret against a fixed tolerance, so doubling $\sigma$ shifts many beneficial recommendations into the reject region. PUP-0.3 under the same perturbation still reads $+0.2284$ utility and 31.4\% harmed because its probability-of-positive framing absorbs the larger $\sigma$ through the normal CDF rather than against a hard threshold. The remaining 34 perturbations leave both welfare-constrained policies strictly above Standard RS in mean utility.

\paragraph{Set 1 (uncertainty / TCE, Figure~\ref{fig:robust_task1}).} Two knobs dominate: K2 $\sigma$-inflation is the only regime-change perturbation, and K5 VOT clamp widening costs PUP about 0.04 utility by letting the empirical-Bayes implied VOT climb beyond what is reasonable, which in turn inflates the \texttt{estimate\_travel\_cost} prediction and over-filters the trips PUP is meant to evaluate.

\paragraph{Set 2 (feedback learning, Figure~\ref{fig:robust_task2}).} P2 (personalization weight) is the dominant knob: doubling it to $(0.20, 0.24)$ costs PUP 0.033 utility and leaves RM nearly unchanged, while halving it lifts both. This is the Task-2 counterpart of PUP's $\sigma$-sensitivity in Set~1: here it is PUP that degrades faster than RM as the personalization signal is amplified, because PUP is the less-restrictive gate and is therefore more exposed to the larger shortlist churn. 

\paragraph{Set 3 (RS stack / catalog, Figure~\ref{fig:robust_task3}).} GMaps proximity scale is the dominant knob: tightening to 2.0~km lifts RM to $+0.2611$ (the highest RM cell in the grid), while loosening to 10.0~km drops RM to $+0.1950$ (the lowest); PUP is nearly flat across both levels. This is the Task-3 complement to Set~2's P2 effect: here RM is the more exposed gate because the $\varepsilon$-band assumes the base ranking is well-aligned with realized utility, whereas PUP's probability-of-positive gate is less sensitive to ranking quality. S1 (GMaps prom/rel split) and S3 (Popularity weights) produce smaller asymmetric shifts in the same direction. Standard~RS remains uniformly below both welfare-constrained policies across the entire grid.

\begin{figure}[h]
    \centering
    \includegraphics[width=0.95\textwidth]{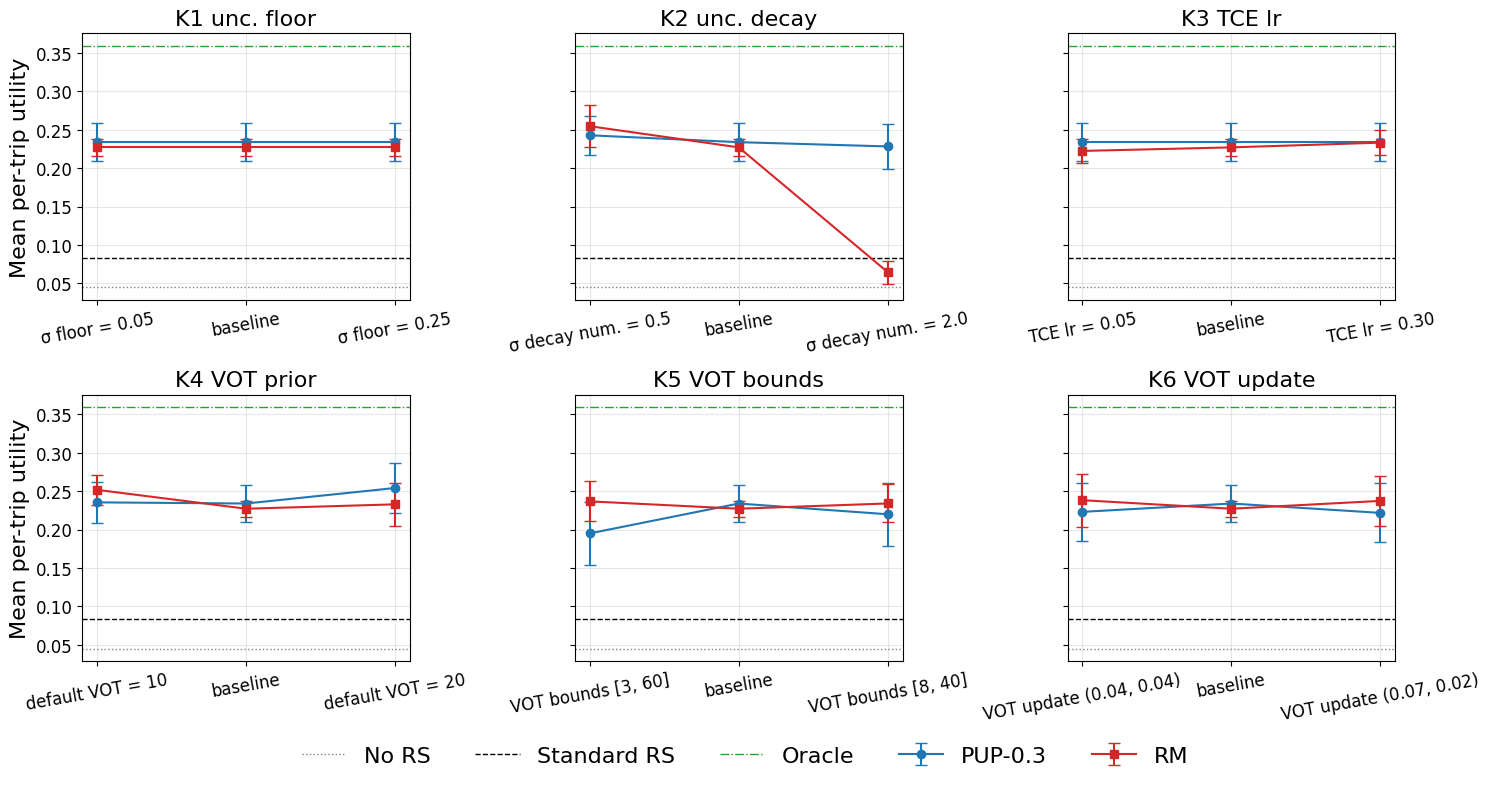}
    \caption{Set~1 OFAT sensitivity over uncertainty / TCE knobs (K1--K6). Each panel shows mean per-trip utility $\pm$ between-seed SD for PUP-0.3 (blue) and RM ($\varepsilon{=}0.27$, red) at the low, baseline, and high levels of one knob. Dotted grey line marks the No-RS reference; dashed green line marks the Oracle reference (both at baseline). Standard RS sits near 0.08 throughout and is shown as a grey series. The K2-hi cell is the only regime-change perturbation across all three tasks.}
    \label{fig:robust_task1}
\end{figure}

\begin{figure}[h]
    \centering
    \includegraphics[width=0.95\textwidth]{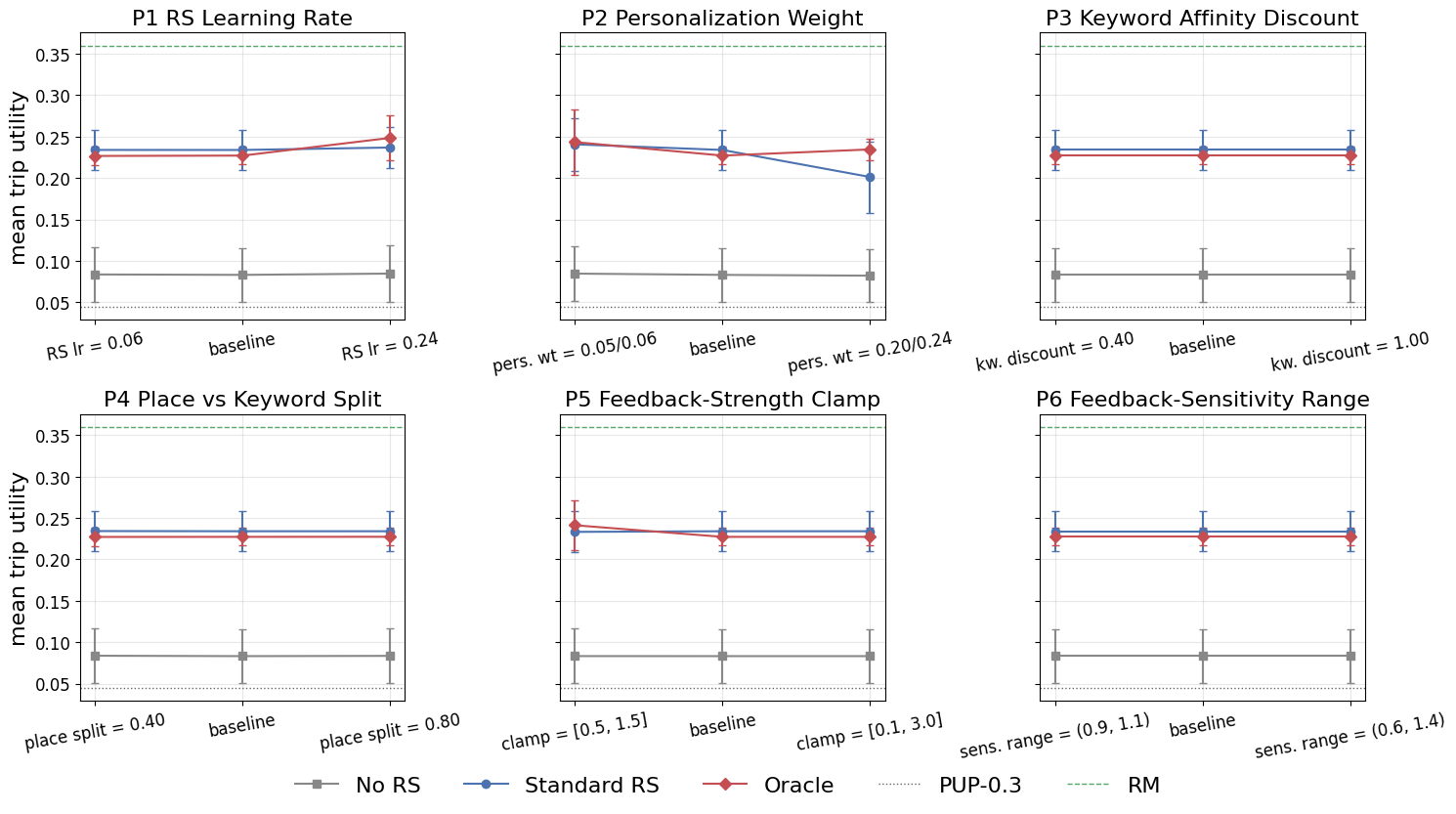}
    \caption{Set~2 OFAT sensitivity over personalization / feedback-learning knobs (P1--P6). P2 (personalization weight) is the dominant knob and is asymmetric: doubling it costs PUP-0.3 about 0.03 utility while leaving RM essentially flat.}
    \label{fig:robust_task2}
\end{figure}

\begin{figure}[h]
    \centering
    \includegraphics[width=\textwidth]{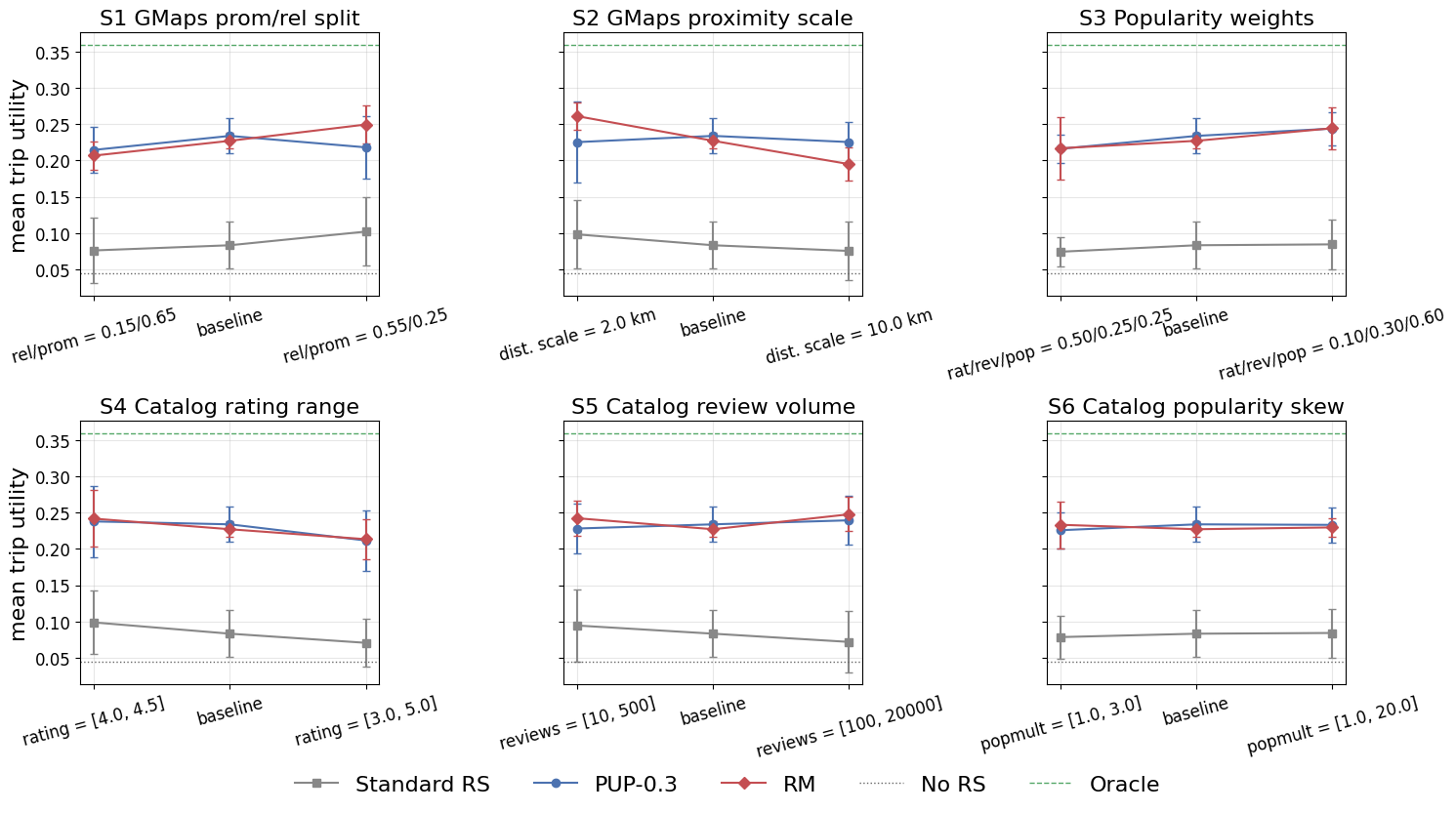}
    \caption{Set~3 OFAT sensitivity over RS-stack / catalog knobs (S1--S6). S2 (GMaps proximity scale) is the dominant knob, asymmetric in the opposite direction from Set~2: RM is the exposed policy under base-stack/geography misalignment, whereas PUP stays stable across both S2 levels.}
    \label{fig:robust_task3}
\end{figure}

\end{document}